\documentclass[11pt]{article}
\usepackage{amsmath,amssymb,amsthm,mathtools,fullpage,bbm,hyperref}
\hypersetup{colorlinks=true,linkcolor=blue,citecolor=blue,urlcolor=blue}
\usepackage[backend=biber,style=alphabetic]{biblatex}
\addbibresource{ref.bib}

\theoremstyle{plain}
\newtheorem{theorem}{Theorem}
\newtheorem{lemma}[theorem]{Lemma}
\newtheorem{proposition}[theorem]{Proposition}
\theoremstyle{remark}
\newtheorem{remark}[theorem]{Remark}
\theoremstyle{definition}
\newtheorem{definition}[theorem]{Definition}
\newtheorem{assumption}[theorem]{Assumption}

\newcommand{\E}{\mathbb{E}}
\newcommand{\Prb}{\mathbb{P}}

\newcommand{\op}{\mathrm{op}}
\newcommand{\R}{\mathbb{R}}
\DeclareMathOperator{\Dev}{Dev}
\DeclareMathOperator{\polylog}{polylog}
\title{A 1.5-Query Lower Bound for the Unitary Synthesis Problem}
\author{Ting Jia Huang\thanks{National Taiwan University.}}
\date{\today}

\begin{document}
\maketitle

\begin{abstract}
The fine-grained landscape of quantum query complexity lies between the well-understood
one-query and multi-query regimes. In particular, the 1.5-query regime arises naturally in
reductions between unitary synthesis and oracle distinguishing games, and models realistic
cryptographic adversaries that can make “slightly more than one” query. Understanding this
intermediate regime is therefore crucial for both complexity theory and quantum cryptography. 

\textbf{Main result.} We prove that the all-subsets deviation functional $\Dev_{1.5}$ obeys the following
bounds: 
\[
\Dev_{1.5} \;\lesssim\;
\begin{cases}
  O\!\left(\dfrac{M \log M}{\sqrt{K}}\right), & \text{(conservative, assumption-light)}, \\[1.2ex]
  \tilde{O}\!\left(\dfrac{(\log M)^{3/2}}{\sqrt{K}}\right), & \text{(block-orthogonality / sum-to-max)}, \\[1.2ex]
  \tilde{O}\!\left(\dfrac{1}{\sqrt{K}}\right), & \text{(chaining with sparsity $r=O(1)$ or effective dimension $d^\ast=O(\polylog (M))$)} .
\end{cases}
\]
These bounds are tight up to polylogarithmic factors and hold with high probability under the
per-sample operator norm assumption.

Our results unify the conservative and structural approaches to bounding $\Dev_{1.5}$, and 
clarify the role of additional structure (orthogonality, sparsity, effective dimension) in eliminating
the linear $M$ factor. \emph{Cryptographic implication:} the bounds imply that pseudorandom 
state ensembles and related primitives remain secure even against adversaries in the 
1.5-query regime, strengthening the one-query security guarantees of Lombardi--Ma--Wright.
\end{abstract}

\section{Background: Unitary Synthesis and the Oracle State Distinguishing Game}

Lower bounds in quantum query complexity serve not only as structural results in complexity
theory but also as security guarantees in cryptography. The one-query regime has recently
been settled by Lombardi--Ma--Wright~\cite{LMW23}, showing strong separations and
cryptographic applications. Yet, many natural reductions in both synthesis and security
settings yield adversaries that are not purely one-query, but also not fully multi-query:
they operate in a fractional or amortized ``one-and-a-half query'' manner. This motivates
a systematic study of the \emph{1.5-query regime}.

The \emph{Unitary Synthesis Problem} (USP), introduced by Aaronson and Kuperberg~\cite{AK07} and later named by Aaronson~\cite{Aaronson16}, asks whether any $n$-qubit unitary $U$ can be efficiently implemented by a polynomial-size quantum circuit $A^f$ with oracle access to a Boolean function $f$. Formally, the question is whether there exists a \emph{universal} oracle circuit such that, for every unitary $U$, there exists a function $f$ with the property that $A^f$ implements $U$. A positive resolution would reduce unitary synthesis to Boolean function evaluation, while a negative resolution would imply a black-box separation between classical and quantum tasks.

The \emph{Oracle State Distinguishing Game} (OSDG) is an adversarial task in which the goal is to distinguish between two ensembles of quantum states, given limited oracle access. As formulated by Lombardi, Ma, and Wright~\cite{LMW23}, the game proceeds as follows:  
The challenger samples $b \in \{0,1\}$ and returns to the adversary:
\begin{itemize}
    \item[$(b=0)$]  a pseudorandom state $|{\psi_{R_k}}\rangle$ from a family $\{|{\psi_{R_k}}\rangle\}_{k \in [K]}$ defined by a random function family $R$;
    \item[$(b=1)$] a uniformly random binary phase state $|{\psi_h}\rangle$, where $h : [N] \to \{\pm1\}$.
\end{itemize}
The adversary, modeled as an oracle circuit $A^f$, can query $f$ (which may depend on $R$), and must guess $b' \in \{0,1\}$. Its \emph{distinguishing advantage} is the bias over random guessing.

The USP and OSDG are tightly connected: an efficient solution to USP would enable a successful OSDG adversary by constructing the unitary that maps pseudorandom states into known basis states. Hence, query lower bounds for OSDG directly imply query lower bounds for USP.

\paragraph{From one-query to 1.5-query.}
While the one-query lower bound established by Lombardi--Ma--Wright~\cite{LMW23} already 
rules out the possibility of exact unitary synthesis with a single oracle call, 
it remains unclear how much additional power is gained when the adversary is allowed 
slightly more than one query. 
The \emph{1.5-query regime}---where the circuit structure permits one full query plus 
a fractional or amortized additional query across multiple branches---naturally arises 
in reductions between synthesis and distinguishing games. 
Understanding this intermediate regime is crucial for both \emph{complexity theory}, 
as it probes the fine granularity of quantum query complexity, and for \emph{cryptography}, 
where security against adversaries with ``one-and-a-half'' queries reflects realistic 
attack scenarios that interpolate between single-shot and multi-query access.

For clarity, throughout we denote by $\text{Dev}_{1.5}$ 
the \emph{all-subsets deviation functional}, 
which measures the operator-norm deviation of empirical averages 
over randomized diagonal gadgets from their population means. 
A formal definition will be given in Section~4.

\section{Prior Work and the One-Query Lower Bound}

In their breakthrough work, \emph{A One-Query Lower Bound for Unitary Synthesis and Breaking Quantum Cryptography}~\cite{LMW23}, Lombardi, Ma, and Wright resolved a long-standing open problem by establishing the first general one-query lower bound for USP. Their main contributions include:

\begin{enumerate}
    \item \textbf{One-query lower bound for USP:}  
    They construct unitaries $U$ such that no polynomial-time oracle circuit $A^f$ making only a single quantum query to $f$ can even approximately implement $U$—even allowing arbitrary non-oracle gates and ancilla.
    
    \item \textbf{Cryptographic implications:}  
    They show that, relative to a random oracle, there exist pseudorandom states (PRS) and bit commitment schemes that remain secure against all one-query quantum adversaries. This indicates that the security of these primitives may be independent of traditional complexity assumptions.
    
    \item \textbf{Technical framework:}
    \begin{itemize}
        \item They formalize OSDG and reduce USP lower bounds to distinguishing bounds in this game.
        \item They relax the adversary's distinguishing problem into a spectral optimization problem, then apply tools from \emph{random matrix theory} and \emph{matrix concentration inequalities}.
        \item They establish the tightness of the one-query lower bound by exhibiting a matching adversary with advantage $\Theta(1/\sqrt{K})$.
    \end{itemize}
\end{enumerate}

By reframing USP through the lens of OSDG, their results open the door to a broader framework for proving stronger lower bounds—such as for \emph{1.5-query} or multi-query adversaries.

\section{Preliminaries}

\subsection{Sub-Gaussian variables and processes}
A random variable $X$ is sub-Gaussian with proxy $\sigma^2$ if
$\E[\exp\{\lambda(X-\E X)\}] \le \exp(\sigma^2\lambda^2/2)$ for all $\lambda\in\R$.
A process $\{X_t\}_{t\in T}$ is sub-Gaussian w.r.t.\ a metric $d$ if
$X_t-X_s$ is mean-zero sub-Gaussian with proxy $\le d(s,t)^2$ for all $s,t\in T$.

\subsection{Matrix Bernstein}
Let $\{X_k\}_{k=1}^n$ be independent mean-zero self-adjoint $d\times d$ matrices with
$\|X_k\|_{\op}\le L$ and variance parameter
$\sigma^2=\bigl\|\sum_{k=1}^n \E[X_k^2]\bigr\|_{\op}$.
Then for all $t\ge 0$,
\[
\Prb\!\left(\left\|\sum_{k=1}^n X_k\right\|_{\op}\ge t\right)
\le 2d\cdot \exp\!\left(-\frac{t^2/2}{\;\sigma^2+Lt/3\;}\right).
\]
For more detail, refer \cite{Tropp12,Tropp15}.

\subsection{Noncommutative Khintchine}
If $\varepsilon_1,\dots,\varepsilon_n$ are i.i.d.\ Rademachers and $A_1,\dots,A_n$ are fixed matrices, then
\[
\E_{\varepsilon}\left\|\sum_{i=1}^n \varepsilon_i A_i\right\|_{\op}
\le C \cdot \max\!\left\{
\left\|\Big(\sum_{i=1}^n A_iA_i^\ast\Big)^{1/2}\right\|_{\op},\;
\left\|\Big(\sum_{i=1}^n A_i^\ast A_i\Big)^{1/2}\right\|_{\op}
\right\}.
\]
For more detail, refer \cite{Tropp12,Tropp15}.
\subsection{Dudley’s entropy integral}
If $(T,d)$ is a metric space and $\{X_t\}$ is sub-Gaussian w.r.t.\ $d$, then
\[
\E\Big[\sup_{t\in T} X_t\Big]
\;\le\; C \int_0^{\mathrm{diam}(T)} \sqrt{\log N(T,d,\varepsilon)}\,d\varepsilon,
\]
where $N(T,d,\varepsilon)$ is the covering number.
For more detail, refer \cite{Dudley67}.
\subsection{Hamming cube covering}
For the Hamming cube $\{0,1\}^L$ with normalized Hamming metric
\[
d_H(s,t)\;=\;\sqrt{\frac{\|s-t\|_0}{L}},
\]
the covering number obeys
\[
N(\{0,1\}^L,d_H,\varepsilon)
\;\le\; \sum_{j=0}^{\lfloor \varepsilon^2 L\rfloor}\binom{L}{j}
\;\le\; \exp\!\bigl(L\,H(\varepsilon^2)\bigr),
\]
where $H(p)=-p\log p-(1-p)\log(1-p)$ is the binary entropy and $\log$ is natural logarithm.
In later chaining integrals, the constant $e$ in terms like $\log(eM/J)$ denotes the natural base.
For more detail, refer \cite{Dudley67}.
\subsection*{Structural parameters}
Later sections introduce two additional structural parameters that refine the analysis:
\begin{itemize}
  \item $r$: the per-round \emph{sparsity level}, i.e.\ the maximal number of active indices 
        within a single measurement outcome. Intuitively, $r$ quantifies how many projectors 
        can be ``turned on'' simultaneously.
  \item $d^*$: the \emph{effective dimension}, meaning the rank of the common subspace on 
        which all operators act. When $d^* \ll d$, concentration improves due to the 
        restricted support.
\end{itemize}
These parameters will only become formally needed in Section~6 
(chaining with additional structure), but we introduce them here for clarity.

\subsection*{Notation Summary}
\begin{tabular}{ll}
$M$ & Number of measurement outcomes (blocks) \\
$d$ & Dimension of the underlying Hilbert space $\mathcal{H}$ \\
$d^*$ & Effective dimension (rank of common subspace, if applicable) \\
$r$ & Maximal number of active indices per round (sparsity) \\
$K$ & Number of i.i.d.\ gadget seeds / samples \\
$N$ & Security-parameter scale (with $K=\Theta(N)$ in crypto regime) \\
$V$ & Fixed unitary (part of gadget definition) \\
$D_{V,h}$ & Diagonal gadget determined by seed $h$ and $V$ \\
$\Pi_i$ & Projector onto the $i$-th measurement outcome \\
$\Pi_S$ & Coarse-grained projector $\sum_{i\in S}\Pi_i$ \\
$Y_{k,S}$ & $D_{V,R_k}^{\dagger}\Pi_S D_{V,R_k}$ \\
$\mu_S$ & $\E_h\!\left[D_{V,h}^{\dagger}\Pi_S D_{V,h}\right]$ (population mean) \\
$\overline Y_S$ & $\frac{1}{K}\sum_{k=1}^K Y_{k,S}$ (empirical mean) \\
$A_{k,i}$ & $D_{V,R_k}^{\dagger}\Pi_i D_{V,R_k}$ \\
$B_i$ & $\sum_{k=1}^K \varepsilon_k A_{k,i}$ (Rademacher-symmetrized block) \\
$L$ & Per-block operator-norm bound, $L=\Theta(\log M)$ \\
$\Delta$ & $S\triangle T$ (symmetric difference of index sets) \\
$X_{k,\Delta}$ & $\sum_{i\in\Delta} s_i A_{k,i}$ with $s_i\in\{\pm1\}$ as in Step~1 \\
$Z_S$ & $\Big\|\sum_{i\in S} B_i\Big\|_{\op}$ (process used in chaining) \\
$\rho_{\mathrm{gen}}(S,T)$ & $c_0\,L\sqrt{K}\,|S\triangle T|$ (pseudometric; general regime) \\
$\rho_{\mathrm{str}}(S,T)$ & $c_0\,L\sqrt{K\,|S\triangle T|}$ (pseudometric; structured/decoupled) \\
$d_H$ & Normalized Hamming distance on $\{0,1\}^M$ (for entropy bounds) \\
$\Dev_{1.5}$ & All-subsets deviation $\sup_{S\subseteq[M]}\|\overline Y_S-\mu_S\|_{\op}$
\end{tabular}

\section{Setting and objective (1.5-query)}
Let $\mathcal H$ be a $d$–dimensional Hilbert space and
$P=\{\Pi_i\}_{i\in[M]}$ a projective measurement with $\sum_i\Pi_i=I$ and $\Pi_i\Pi_j=\delta_{ij}\Pi_i$.
For $S\subseteq [M]$, define the coarse-grained projector $\Pi_S=\sum_{i\in S}\Pi_i$.
Let $D_{V,h}$ be a (random) diagonal gadget determined by seed $h$ and fixed $V$.
Draw i.i.d.\ seeds $R_1,\dots,R_K$ and set
\[
Y_{k,S}=D_{V,R_k}^\dagger \Pi_S D_{V,R_k},\qquad
\mu_S=\E_h\!\left[D_{V,h}^\dagger \Pi_S D_{V,h}\right],\qquad
\overline Y_S=\frac1K\sum_{k=1}^K Y_{k,S}.
\]

\begin{definition}[All-subsets deviation functional]
\[
\mathrm{Dev}_{1.5}:=\sup_{S\subseteq [M]}\bigl\|\overline Y_S-\mu_S\bigr\|_{\op}.
\]
\end{definition}
For detail, please refer to ~\cite{LMW23}.
\begin{assumption}[Per-sample operator size]\label{ass:size}
There exists $c_0>0$ such that, with high probability over $h$,
$\|D_{V,h}\|\le c_0\sqrt{\log M}$.
Consequently, with high probability for all $i,k$,
\[
\|A_{k,i}\|_{\op}
:=\bigl\|D_{V,R_k}^\dagger \Pi_i D_{V,R_k}\bigr\|_{\op}
\le L,\qquad L:=\Theta(\log M).
\]
If needed, this can be ensured via a truncation argument.
\end{assumption}

\section{Conservative route (assumption-light)}\label{sec:conservative}
\subsection{Step 1: Symmetrization}
Let $\varepsilon_1,\dots,\varepsilon_K$ be i.i.d.\ Rademachers. Then
\[
\E\Big[\sup_{S\subseteq[M]}\|\overline Y_S-\mu_S\|_{\op}\Big]
\;\le\; \frac{2}{K}\;
\E\left[\sup_{S\subseteq[M]}\left\|\sum_{k=1}^K \varepsilon_k Y_{k,S}\right\|_{\op}\right].
\]

\subsection{Step 2: Linearization over subsets}
Write $\Pi_S=\sum_{i\in S}\Pi_i$ and define
\[
B_i:=\sum_{k=1}^K \varepsilon_k A_{k,i},\qquad
A_{k,i}:=D_{V,R_k}^\dagger \Pi_i D_{V,R_k}.
\]
Then
$\sum_{k=1}^K \varepsilon_k Y_{k,S}=\sum_{i\in S} B_i$, so by the triangle inequality
\[
\sup_{S\subseteq [M]}\left\|\sum_{k=1}^K \varepsilon_k Y_{k,S}\right\|_{\op}
\le\sum_{i=1}^M \|B_i\|_{\op}.
\]

\subsection{Step 3: Single-block expectation via noncommutative Khintchine}
Conditioned on the samples,
\[
\E_{\varepsilon}\|B_i\|_{\op}
=\E_{\varepsilon}\left\|\sum_{k=1}^K \varepsilon_k A_{k,i}\right\|_{\op}
\;\lesssim\;
\left\|
\Big(\sum_{k=1}^K A_{k,i}^2\Big)^{1/2}
\right\|_{\op}
\;\le\; \sqrt{K}\,L,
\]
where $A_{k,i}\succeq 0$ and $A_{k,i}^2\preceq \|A_{k,i}\|\,A_{k,i}\preceq L A_{k,i}$, hence
$\bigl\|\sum_k A_{k,i}^2\bigr\|_{\op}\le KL^2$.

\subsection{Step 4: High-probability for each block \& union bound}
Fix $i$; $\{\varepsilon_k A_{k,i}\}$ are independent, mean-zero, self-adjoint, and
$\|\varepsilon_k A_{k,i}\|\le L$,
\[
V_i:=\left\|\sum_{k=1}^K \E_\varepsilon[(\varepsilon_k A_{k,i})^2]\right\|_{\op}
=\left\|\sum_{k=1}^K A_{k,i}^2\right\|_{\op}\le K L^2.
\]
Matrix Bernstein gives for any $t\ge 0$:
\[
\Prb\bigl(\|B_i\|_{\op}\ge t\,\big|\,\{R_k\}\bigr)
\;\le\; 2d\cdot \exp\!\left(-\frac{t^2/2}{V_i+Lt/3}\right).
\]
Union bound over $i\in[M]$ and choosing
\(t \asymp \sqrt{K}L\sqrt{\log(Md/\delta)} + L\log(Md/\delta)\) gives
\[
\max_{i\in[M]}\|B_i\|_{\op}
\;\lesssim\; \sqrt{K}L\sqrt{\log(Md/\delta)} + L\log(Md/\delta)
\quad\text{w.p.\ } \ge 1-\delta.
\]

\subsection{Step 5: Put together (no sum-to-max shortcut)}
By exchangeability (blocks are identically distributed) and Step~4,
\[
\E[\mathrm{Dev}_{1.5}]
\;\le\; \frac{2}{K}\E\!\left[\sum_{i=1}^M \|B_i\|_{\op}\right]
=\frac{2M}{K}\,\E\|B_1\|_{\op}
\;\lesssim\; \frac{M}{K}\cdot \sqrt{K} L
= O\!\left(\frac{M\log M}{\sqrt{K}}\right).
\]
Similarly,
\[
\mathrm{Dev}_{1.5}
\;\lesssim\; \frac{2}{K}\,M\!\left[
\sqrt{K}L\sqrt{\log(Md/\delta)} + L\log(Md/\delta)
\right]
\]
with probability at least $1-\delta$.
\begin{remark}[On the $M$ factor]
Without any sum-to-max structural assumption, the triangle inequality forces the $\sum_i$
and thus the linear factor $M$. To reduce the $M$-dependence to polylogarithmic, one
must introduce additional decorrelation/chaining structure.
\end{remark}

\section{Structural routes: block-orthogonality and chaining}

\subsection{Block-orthogonality (sum-to-max)}
Intuitively, block-orthogonality corresponds to operators acting on disjoint subsystems, much like a block-diagonal matrix where each block evolves independently.
If there exists a decomposition $\mathcal H=\bigoplus_{i=1}^M \mathcal H_i$ such that each $A_{k,i}$
acts only on $\mathcal H_i$,
then for any $S$,
\[
\left\|\sum_{i\in S} B_i\right\|_{\op}=\max_{i\in S}\|B_i\|_{\op}.
\]
Combining single-block Bernstein/Khintchine control with a maximum bound,
\[
\E[\Dev_{1.5}] \;\lesssim\; \frac{L}{\sqrt{K}}\sqrt{\log(Md)}+\frac{L}{K}\log(Md),
\]
and with $L=\Theta(\log M)$, $d\le \mathrm{poly}(M)$, the main term is
\(
\tilde O\!\bigl((\log M)^{3/2}/\sqrt{K}\bigr).
\)

\subsection{Chaining without block structure}\label{sec:chaining}
Chaining is a method to control the supremum of a random process by building nets of increasing resolution. Instead of bounding all indices at once with a crude union bound, chaining accumulates finer-scale controls across scales, summarized by Dudley’s entropy integral.
We now treat the general case without assuming block orthogonality.  
Recall that $B_i = \sum_{k=1}^K \varepsilon_k A_{k,i}$ and for $S\subseteq[M]$ we define
\[
Z_S := \Big\| \sum_{i\in S} B_i \Big\|_{\op}.
\]
Our goal is to bound $\E\big[ \sup_{S\subseteq[M]} Z_S \big]$.

\paragraph{Step 1: Increment decomposition.}
For two subsets $S,T\subseteq[M]$, let $\Delta := S \triangle T = (S\setminus T)\cup(T\setminus S)$
and for each $i\in\Delta$ let $s_i=+1$ if $i\in S\setminus T$ and $s_i=-1$ if $i\in T\setminus S$.
Then
\[
\sum_{i\in S} B_i - \sum_{i\in T} B_i
= \sum_{i\in \Delta} s_i B_i
= \sum_{i\in \Delta} s_i \sum_{k=1}^K \varepsilon_k A_{k,i}.
\]
Abbreviate $X_{k,\Delta} := \sum_{i\in \Delta} s_i A_{k,i}$, so increments are controlled by
$\big\| \sum_{k=1}^K \varepsilon_k X_{k,\Delta} \big\|_{\op}$.

\paragraph{Step 2: Variance proxies and sub-Gaussian increments.}
Conditioned on $\{A_{k,i}\}$, we have $\|X_{k,\Delta}\|_{\op} \le |\Delta|\, L$.
Moreover, for any fixed $k$ and sign pattern $s\in\{\pm1\}^{\Delta}$,
\[
X_{k,\Delta}^2
=\sum_{i\in\Delta}A_{k,i}^2+\sum_{i<j}s_is_j(A_{k,i}A_{k,j}+A_{k,j}A_{k,i})
\ \preceq\ |\Delta|\sum_{i\in\Delta}A_{k,i}^2,
\]
where we used $\pm(AB+BA)\preceq A^2+B^2$ for self-adjoint $A,B$.
Hence
\[
\Big\|\sum_{k=1}^K X_{k,\Delta}^2\Big\|_{\op}
\ \le\ |\Delta|\cdot \Big\|\sum_{k=1}^K\sum_{i\in\Delta}A_{k,i}^2\Big\|_{\op}
\ \le\ K L^2 |\Delta|^2.
\]
Matrix Bernstein then gives, for all $t\ge0$,
\[
\Pr\!\left(\Big\|\sum_{k=1}^K \varepsilon_k X_{k,\Delta}\Big\|_{\op}\ge t\right)
\ \le\ 2d\exp\!\Big(-c\,\tfrac{t^2}{K L^2 |\Delta|^2 + L|\Delta|\,t}\Big).
\]
\emph{If} for each $k$ the family $\{A_{k,i}\}_i$ is block-orthogonal (or we first average over
independent index signs $\{s_i\}$), the middle norm tightens to $\le K L^2$, improving the proxy to
$K L^2|\Delta|$ and the resulting metric to $L\sqrt{K\,|\Delta|}$.
For more detail in operator algebra, refer \cite{Pisier03}.
\paragraph{Step 3: Pseudometrics for increments (two regimes).}
Write $\Delta=S\triangle T$. Define
\[
\rho_{\mathrm{gen}}(S,T):=c_0\,L\sqrt{K}\,|\Delta|,
\qquad
\rho_{\mathrm{str}}(S,T):=c_0\,L\sqrt{K\,|\Delta|},
\]
for an absolute constant $c_0>0$. Both are pseudometrics on $\mathcal{T}:=2^{[M]}$.
Symmetry and $\rho(S,S)=0$ are immediate; triangle inequalities follow from
$|S\triangle U|\le |S\triangle T|+|T\triangle U|$ and
$\sqrt{|S\triangle U|}\le \sqrt{|S\triangle T|}+\sqrt{|T\triangle U|}$.

\medskip
\noindent\textbf{Diameters.}
\[
\operatorname{diam}(\mathcal{T},\rho_{\mathrm{gen}})=c_0\,L\sqrt{K}\,M,
\qquad
\operatorname{diam}(\mathcal{T},\rho_{\mathrm{str}})=c_0\,L\sqrt{K}\,\sqrt{M}.
\]

\paragraph{Step 4: Covering numbers for $(\mathcal{T},\rho)$.}
Let $N(\varepsilon)$ be the $\varepsilon$-covering number of $(\mathcal{T},\rho)$.
Set
\[
u_{\mathrm{gen}}=\Big\lfloor \frac{\varepsilon}{c_0 L\sqrt{K}}\Big\rfloor,
\qquad
u_{\mathrm{str}}=\Big\lfloor \Big(\frac{\varepsilon}{c_0 L\sqrt{K}}\Big)^2\Big\rfloor.
\]
A standard Hamming-cube argument yields
\[
N_{\mathrm{gen}}(\varepsilon)
\ \le\ \sum_{j=0}^{u_{\mathrm{gen}}}\binom{M}{j}
\ \le\ \Big(\frac{eM}{u_{\mathrm{gen}}}\Big)^{u_{\mathrm{gen}}},
\qquad
N_{\mathrm{str}}(\varepsilon)
\ \le\ \sum_{j=0}^{u_{\mathrm{str}}}\binom{M}{j}
\ \le\ \Big(\frac{eM}{u_{\mathrm{str}}}\Big)^{u_{\mathrm{str}}}.
\]
Passing from bilinear forms $u^\top(\cdot)v$ to $\|\cdot\|_{\op}$ adds an extra
$\sqrt{\log d}$ via an $\eta$-net on the unit sphere; we will absorb this into
$\widetilde{O}(\cdot)$ notation below.

\paragraph{Step 5: Dudley/$\gamma_2$ integrals.}
Dudley’s integral gives, in either regime,
\[
\E\Big[\sup_{S\subseteq[M]} Z_S\Big]
\ \lesssim\ \int_{0}^{\operatorname{diam}}\sqrt{\log N(\varepsilon)}\,d\varepsilon.
\]
With the bounds from Step~4 and the diameters above, a change of variables shows
\[
\int_{0}^{\operatorname{diam}_{\mathrm{gen}}}\sqrt{\log N_{\mathrm{gen}}(\varepsilon)}\,d\varepsilon
\ \lesssim\ L\sqrt{K}\,M\cdot \mathrm{polylog}(M),
\]
and similarly
\[
\int_{0}^{\operatorname{diam}_{\mathrm{str}}}\sqrt{\log N_{\mathrm{str}}(\varepsilon)}\,d\varepsilon
\ \lesssim\ L\sqrt{K}\,M\cdot \mathrm{polylog}(M).
\]
Therefore,
\[
\E\Big[\sup_{S\subseteq[M]} Z_S\Big]\ \le\ \widetilde{C}\,L\sqrt{K}\,M,
\]
where $\widetilde{C}$ hides only polylogarithmic factors in $M$ (and the extra $\sqrt{\log d}$).

\begin{remark}[Why $M$ survives even with $\rho_{\mathrm{str}}$]
The metric $\rho_{\mathrm{str}}$ shrinks single-step increments (from $|\Delta|$ to $\sqrt{|\Delta|}$),
but the index class remains the full power set $2^{[M]}$, whose entropy scales like
$\sum_{j\le u}\binom{M}{j}$. Consequently, Dudley’s integral still contributes a leading
$M$ factor, and the baseline chaining bound has order $\widetilde{O}(L\sqrt{K}\,M)$.
To reduce or remove the $M$-dependence one must either (i) use block-orthogonality
(sum-to-max, replacing sums by maxima), or (ii) reduce the class entropy explicitly
(e.g., restrict to $|S|\le r$), in which case the bound becomes
$\widetilde{O}(L\sqrt{K}\,r)$ when $r\ll M$.
\end{remark}

\paragraph{Step 6: Final bound on $\mathrm{Dev}_{1.5}$.}
Undoing symmetrization (Step~1 contributed a factor $2/K$), we obtain
\[
\E[\mathrm{Dev}_{1.5}]
\ \le\ \frac{2}{K}\,\E\Big[\sup_{S\subseteq[M]} Z_S\Big]
\ \le\ \widetilde{O}\!\Big(\frac{L\,M}{\sqrt{K}}\Big).
\]
\noindent
\emph{High-probability versions} follow by applying tail bounds for increments in Step~2 within
generic chaining (or by a peeling/union-bound argument), incurring the same leading scale
$\widetilde{O}(L\sqrt{K}\,M)$ for $\sup_S Z_S$ and thus
$\widetilde{O}(LM/\sqrt{K})$ for $\mathrm{Dev}_{1.5}$.

\paragraph{Chaining with $r$-sparsity (entropy reduction).}
Suppose we only range over subsets $S\subseteq[M]$ with $|S|\le r$ (or, per round, at most
$r$ indices can be active). Then the covering numbers in Step~4 improve to
\(
\log N(\varepsilon)\lesssim \min\{u,r\}\log\!\frac{eM}{\min\{u,r\}},
\)
with $u=u_{\mathrm{gen}}$ or $u_{\mathrm{str}}$ as defined there. This yields:

\begin{theorem}[Chaining with $r$-sparsity]\label{thm:r-sparse-chaining}
Under the setup of this section, restricting to $|S|\le r$ gives
\[
\E\Big[\sup_{S:|S|\le r} Z_S\Big]\ \le\ \widetilde{O}\!\big(L\sqrt{K}\,r\big),
\qquad
\E\,\Dev_{1.5}\ \le\ \widetilde{O}\!\big(\tfrac{L\,r}{\sqrt{K}}\big).
\]
If, in addition, block-orthogonality holds (sum-to-max), the $r$-dependence collapses
to polylogarithmic via a maximum over $M$ blocks.
\end{theorem}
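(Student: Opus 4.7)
The plan is to run the chaining machinery of Section~\ref{sec:chaining} but over the restricted index class $\mathcal{T}_r := \{S \subseteq [M] : |S| \le r\}$; only the entropy estimate changes. All the sub-Gaussian increment analysis (Step~2), the pseudometric construction (Step~3), and the unwinding of symmetrization (Step~6) carry over verbatim, so the task reduces to (i) sharpening covering numbers to exploit the cap $|S|\le r$ and (ii) re-evaluating Dudley's integral.

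I would first recompute the geometry of $\mathcal{T}_r$ under $\rho_{\mathrm{str}}$. Since $|S \triangle T| \le 2r$, the diameter shrinks from $O(L\sqrt{KM})$ to $O(L\sqrt{Kr})$; a net within $\mathcal{T}_r$ differs from any center on at most $\min\{u_{\mathrm{str}}(\varepsilon), r\}$ indices, so the covering number obeys $\log N(\varepsilon) \lesssim \min\{u,r\}\log(eM/\min\{u,r\})$, and throughout the integration range $u_{\mathrm{str}}\lesssim r$. Substituting $u = (\varepsilon/(c_0 L\sqrt{K}))^2$ collapses Dudley's integrand to $\sqrt{\log(eM/u)}$ on $u\in[0,r]$, so
\[
\int_0^{\operatorname{diam}} \sqrt{\log N(\varepsilon)}\, d\varepsilon \;\lesssim\; L\sqrt{K}\cdot r\sqrt{\log(eM/r)} \;=\; \widetilde{O}(L\sqrt{K}\, r),
\]
after the standard substitution $v=u/r$ and noting that $\int_0^1\sqrt{\log(1/v)}\,dv=O(1)$. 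Absorbing the $\sqrt{\log d}$ from the unit-sphere net into $\widetilde{O}$ gives $\E[\sup_{S\in\mathcal{T}_r} Z_S]\le \widetilde{O}(L\sqrt{K}\,r)$, and the $2/K$ factor from Step~1 converts this to $\E[\Dev_{1.5}]\le \widetilde{O}(Lr/\sqrt{K})$. High-probability versions then follow from generic-chaining tails, exactly as in Step~6.

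The ``if, in addition'' clause is essentially free: under block-orthogonality (sum-to-max), $\|\sum_{i\in S} B_i\|_{\op}$ collapses to $\max_{i\in S}\|B_i\|_{\op}$, which single-block matrix Bernstein controls by $\widetilde{O}(\sqrt{K}L)$ uniformly in $i$. The factor $r$ then disappears and the bound on $\Dev_{1.5}$ becomes polylogarithmic in $M$, matching the second case of the main theorem.

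The main obstacle is justifying the tighter pseudometric $\rho_{\mathrm{str}}$ when we do \emph{not} assume block-orthogonality. Matrix Bernstein applied directly to $X_{k,\Delta}=\sum_{i\in\Delta} s_i A_{k,i}$ only gives the cruder proxy $KL^2|\Delta|^2$, i.e.\ $\rho_{\mathrm{gen}}$; chaining with $\rho_{\mathrm{gen}}$ on $\mathcal{T}_r$ introduces an extra $\sqrt{r}$ and yields only $\widetilde{O}(Lr^{3/2}/\sqrt{K})$. The fix, foreshadowed in Step~2 of Section~\ref{sec:chaining}, is an auxiliary decoupling step that first averages over independent index signs $\{s_i\}$: the cross terms $s_is_j(A_{k,i}A_{k,j}+A_{k,j}A_{k,i})$ vanish in expectation, tightening the variance proxy to $KL^2|\Delta|$ as needed. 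Verifying that this inner decoupling composes cleanly with the outer Rademachers $\{\varepsilon_k\}$ (and respects the conditioning on $\{A_{k,i}\}$) is the delicate piece.
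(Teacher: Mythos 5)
Your proposal takes the same route as the paper's proof sketch — entropy reduction plus Dudley over the restricted class $\mathcal{T}_r$ — and your covering-number calculation and change of variables are correct. More importantly, you have correctly identified the real gap: the claimed $\widetilde{O}(L\sqrt{K}\,r)$ requires the pseudometric $\rho_{\mathrm{str}}$, whose justification in Step~2 of Section~\ref{sec:chaining} is conditional. Without block-orthogonality, the variance proxy for $X_{k,\Delta}=\sum_{i\in\Delta}s_iA_{k,i}$ contains cross-terms $s_is_j(A_{k,i}A_{k,j}+A_{k,j}A_{k,i})$, and the paper's suggested ``first average over independent index signs $\{s_i\}$'' does not actually make sense in the chaining increment: the signs $s_i$ there are \emph{deterministic}, fixed by the pair $(S,T)$ via $\Delta=S\triangle T$, so there is nothing to decouple over. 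Matrix Bernstein with the honest (non-block-orthogonal) proxy gives only $\rho_{\mathrm{gen}}$, and, as you compute, the chaining bound then degrades to $\widetilde{O}(L\sqrt{K}\,r^{3/2})$. So the paper's one-line proof sketch is not sound as stated in the ``no block-orthogonality'' regime that the theorem's phrasing (``if, in addition, block-orthogonality holds'') implies it covers.

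What both you and the paper miss is that the claimed bound $\widetilde{O}(L\sqrt{K}\,r)$ is nevertheless true, by a much more elementary argument that does not use chaining at all. For $|S|\le r$, the triangle inequality gives
\[
Z_S=\Bigl\|\sum_{i\in S}B_i\Bigr\|_{\op}\le\sum_{i\in S}\|B_i\|_{\op}\le r\max_{i\in[M]}\|B_i\|_{\op},
\]
and the max is controlled in expectation (and with high probability) by the single-block Matrix Bernstein plus union bound already done in Step~4 of the conservative route: $\E\max_i\|B_i\|_{\op}\lesssim L\sqrt{K\log(Md)}+L\log(Md)=\widetilde{O}(L\sqrt{K})$. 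Hence $\E\bigl[\sup_{|S|\le r}Z_S\bigr]\le\widetilde{O}(L\sqrt{K}\,r)$ and $\E[\Dev_{1.5}]\le\widetilde{O}(Lr/\sqrt{K})$, with no pseudometric needed. This is the clean way to make the theorem rigorous in the stated generality; the chaining machinery is genuinely needed only when you want finer control than the crude $r\cdot\max$ bound (e.g.\ for $\gamma_2$-type constants or high-probability refinements beyond the union bound). Your paragraph on the block-orthogonality clause is fine and matches the paper.
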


\begin{proof}[Proof sketch]
Replace $\sum_{j\le u}\binom{M}{j}$ with $\sum_{j\le \min\{u,r\}}\binom{M}{j}$ in Step~4 and
repeat Step~5. For $u\gtrsim r$ the entropy term is $r\log(eM/r)$, so the Dudley integral
scales like $L\sqrt{K}\,r$ up to polylogs. Symmetrization gives the stated $\Dev_{1.5}$ bound.
\end{proof}

\begin{remark}[Effective dimension $d^*$]
Low effective dimension replaces the auxiliary $\sqrt{\log d}$ factor by $\sqrt{\log d^*}$.
Alone it does not remove the leading $M$, but combined with $r$-sparsity or sum-to-max it
tightens the polylogarithmic terms.
\end{remark}

\section{Cryptographic negligibility}

Let $K=\Theta(N)$ denote the security-parameter scale. Throughout this section, we absorb
$L=\Theta(\log M)$ into the $\tilde O(\cdot)$ notation. We summarize the regimes in which
$\mathrm{Dev}_{1.5}$ is negligible (ignoring polylog factors).

\paragraph{(A) Assumption-light (conservative or chaining baseline).}
From Sections~5 and~6.2 we have the baseline
\[
\mathrm{Dev}_{1.5}
\;=\; \tilde O\!\left(\frac{M}{\sqrt{N}}\right).
\]
Thus negligibility requires
\[
M \;=\; o\!\left(\frac{\sqrt{N}}{\mathrm{polylog}\,N}\right).
\]

\paragraph{(B) Block-orthogonality (sum-to-max).}
Under the block-orthogonality structure of Section~6.1,
\[
\mathrm{Dev}_{1.5}
\;=\; \tilde O\!\left(\frac{1}{\sqrt{N}}\right),
\]
so negligibility holds for any $M=\mathrm{poly}(N)$.

\paragraph{(C) Chaining with additional structure (entropy reduction and/or low $d^*$).}
If, in addition to the chaining setup of Section~6.2, the index class has bounded entropy,
e.g.\ only $|S|\le r=O(1)$ are admissible per round, then
\[
\mathrm{Dev}_{1.5}\;=\; \tilde O\!\left(\frac{r}{\sqrt{N}}\right).
\]
Thus negligibility holds for any $M=\mathrm{poly}(N)$ when $r=O(\mathrm{polylog}\,N)$.
A low effective dimension $d^*$ further improves the polylogarithmic factors by replacing
$\sqrt{\log d}$ with $\sqrt{\log d^*}$. Without such entropy reduction, the chaining baseline
remains $\tilde{O}(M/\sqrt{N})$.

\section{Discussion}
This unified view highlights three aspects:
\begin{itemize}
\item[(i)] The conservative route yields a complete and usable 
\[
\mathbf{M \log M / \sqrt{K}}
\]
bound; 
\item[(ii)] Reducing the $M$-dependence below linear requires \emph{additional structure}:
either block-orthogonality (sum-to-max, turning sums over $i$ into a maximum and yielding
$\widetilde{O}(1/\sqrt{K})$ up to polylog factors), or explicit \emph{entropy reduction}
of the index class, e.g., restricting to $|S|\le r$ which gives $\widetilde{O}( r/\sqrt{K})$. 
Baseline chaining with improved increments but without entropy reduction does \emph{not}
by itself remove the leading $M$ factor.
\item[(iii)] In cryptographic contexts, these regimes translate into two negligibility frontiers.
\end{itemize}

\begin{remark}[On constants]
The factor of $2$ introduced in symmetrization is absorbed into the
$\lesssim$ notation, so the bound is asymptotically sharp
up to absolute constants.
\end{remark}

\printbibliography
\appendix

\section{Proofs for Section 3 (Preliminaries)}

\begin{lemma}[Sub-Gaussian tail from mgf; process version]\label{lem:subg}
Let $X$ be a real r.v.~such that 
\[
\E\exp\{\lambda(X-\E X)\}\le \exp(\sigma^2\lambda^2/2)\quad \forall \lambda\in\R.
\]
Then $X$ is sub-Gaussian with parameter $\sigma$, i.e.
\[
\Pr(|X-\E X|\ge t)\le 2\exp\!\left(-\tfrac{t^2}{2\sigma^2}\right)\quad(\forall t\ge0).
\]
More generally, a real-valued process $\{X_t\}_{t\in T}$ is sub-Gaussian w.r.t.~a pseudometric $d$ if
\[
\E\exp\{\lambda[(X_t-X_s)-\E(X_t-X_s)]\}\le \exp(\tfrac{\lambda^2}{2} d(s,t)^2).
\]
\end{lemma}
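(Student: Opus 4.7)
The single-variable claim is a textbook Chernoff optimization from the moment-generating-function (MGF) hypothesis, and the ``process version'' is essentially a definition whose only content for later use is the scalar tail applied to increments. I therefore plan to prove the scalar statement first, then note that the process statement reduces to it pointwise.

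For the scalar tail, the plan is to apply Markov's inequality to the nonnegative variable $\exp\{\lambda(X-\E X)\}$ with $\lambda>0$, use the MGF hypothesis to replace its expectation by $\exp(\sigma^2\lambda^2/2)$, and then optimize in $\lambda$:
\[
\Pr(X-\E X\ge t)\ \le\ \exp\!\left(\tfrac{\sigma^2\lambda^2}{2}-\lambda t\right),
\]
whose minimum over $\lambda>0$ is attained at $\lambda^\ast=t/\sigma^2$ and equals $\exp(-t^2/(2\sigma^2))$. Because the MGF bound is assumed for every $\lambda\in\R$, the identical argument with $-(X-\E X)$ in place of $(X-\E X)$ yields the matching lower tail, and union bounding the two one-sided events produces the stated factor of $2$. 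For the process version, the statement ``$\{X_t\}$ is sub-Gaussian w.r.t.\ $d$ if [MGF bound on centered increments]'' is a definition; its consequence actually used later by Dudley's integral (Section~3.4) and by the chaining argument (Section~6.2) is the tail bound on increments. This follows by applying the scalar statement pointwise to $Y_{s,t}:=(X_t-X_s)-\E(X_t-X_s)$ with proxy $\sigma^2:=d(s,t)^2$; no new idea is required.

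The only real care needed is bookkeeping: the MGF hypothesis is stated for the centered random variable $X-\E X$, so Markov's inequality lands directly on the centered quantity without a further translation, and the two-sided quantifier ``$\forall\lambda\in\R$'' is invoked precisely to deduce the lower tail from the same bound rather than assuming a separate hypothesis on $-X$. I do not anticipate a substantive obstacle, only the standard Chernoff step and a definitional reduction from the process statement to the scalar one.
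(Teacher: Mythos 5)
Your proposal is correct and matches the paper's own argument: both apply the Chernoff/Markov step with the optimal choice $\lambda^\ast = t/\sigma^2$, obtain the lower tail by symmetry (i.e.\ the two-sided MGF hypothesis), and treat the process statement as a pointwise/definitional reduction to the scalar case. No substantive difference in approach.
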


\begin{proof}
Chernoff bound with optimal $\lambda=t/\sigma^2$ yields the one–sided bound; symmetry gives the two–sided case.  
The process case is identical with $X$ replaced by $X_t-X_s$.
\end{proof}

\begin{theorem}[Matrix Bernstein, self-adjoint case]\label{thm:bernstein}
Let $X_1,\dots,X_n$ be independent mean-zero self-adjoint $d\times d$ matrices with $\|X_k\|_{\op}\le L$.  
Let $\sigma^2=\|\sum_{k=1}^n \E[X_k^2]\|_{\op}$. Then for all $t\ge0$,
\[
\Pr\!\left(\Big\|\sum_{k=1}^n X_k\Big\|_{\op}\ge t\right)\le 2d\cdot
\exp\!\left(-\frac{t^2/2}{\sigma^2+Lt/3}\right).
\]
\end{theorem}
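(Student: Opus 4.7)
The approach I would take is the standard Laplace-transform (matrix Chernoff) method, leveraging Lieb's concavity theorem to decouple the summands inside the noncommutative exponential and then bounding each matrix moment generating function (MGF) using the boundedness hypothesis $\|X_k\|_{\op}\le L$.

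First I would reduce the two-sided operator-norm tail to a one-sided maximum-eigenvalue tail. Since $\sum_k X_k$ is self-adjoint,
\(
\|\sum_k X_k\|_{\op}=\max\{\lambda_{\max}(\sum_k X_k),\lambda_{\max}(-\sum_k X_k)\},
\)
so a union bound over the two signs costs only a factor of $2$, producing the leading $2d$ in the theorem. For the one-sided bound I would apply Markov's inequality to the trace exponential: for any $\theta>0$,
\[
\Pr\!\bigl(\lambda_{\max}(\textstyle\sum_k X_k)\ge t\bigr)\;\le\; e^{-\theta t}\,\E\operatorname{tr}\exp\!\bigl(\theta\textstyle\sum_k X_k\bigr).
\]
Then I would invoke Tropp's master inequality (an inductive consequence of Lieb's concavity theorem applied to independent summands):
\[
\E\operatorname{tr}\exp\!\bigl(\textstyle\sum_k \theta X_k\bigr)\;\le\;\operatorname{tr}\exp\!\bigl(\textstyle\sum_k \log\E[e^{\theta X_k}]\bigr).
\]
This is the step I expect to be the main obstacle; without Lieb's inequality one cannot linearize the noncommutative exponential in the independent summands, and all known proofs route through some form of this concavity statement.

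Next, I would bound each matrix cumulant $\log\E[e^{\theta X_k}]$ in the semidefinite order. Expanding $e^{\theta X_k}=I+\theta X_k+\sum_{j\ge 2}(\theta X_k)^j/j!$ and using $X_k^j\preceq L^{j-2}X_k^2$ (which follows from $\|X_k\|_{\op}\le L$ together with $X_k^2\succeq 0$), the mean-zero assumption kills the linear term and summing the geometric tail yields
\(
\E[e^{\theta X_k}]\preceq I+\theta^2 g(\theta)\,\E[X_k^2],
\)
with $g(\theta)=1/\bigl(2(1-L\theta/3)\bigr)$ valid for $0<\theta<3/L$. Operator monotonicity of $\log$ together with $\log(1+x)\le x$ then gives $\log\E[e^{\theta X_k}]\preceq \theta^2 g(\theta)\E[X_k^2]$ in the semidefinite order.

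Finally I would aggregate: summing over $k$, using the variance parameter $\sigma^2=\|\sum_k\E[X_k^2]\|_{\op}$, and bounding $\operatorname{tr}\exp(H)\le d\cdot\exp(\lambda_{\max}(H))$, the right-hand side collapses to $d\cdot\exp\!\bigl(\theta^2 g(\theta)\sigma^2-\theta t\bigr)$. Choosing $\theta=t/(\sigma^2+Lt/3)$ (the optimizer of a Bennett-type exponent) produces the exponent $-t^2/\bigl(2(\sigma^2+Lt/3)\bigr)$, and combining with the factor-of-two symmetrization yields the stated bound with prefactor $2d$. The only delicate point beyond Lieb is verifying that the chosen $\theta$ lies in the admissible range $(0,3/L)$, which follows from the form of the optimizer itself.
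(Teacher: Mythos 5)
Your proposal follows exactly the same route as the paper's proof sketch: Lieb's concavity (via Tropp's master inequality) to subadditivize the trace-MGF over independent summands, the semidefinite cumulant bound $\log\E[e^{\theta X_k}]\preceq \tfrac{\theta^2}{2(1-\theta L/3)}\E[X_k^2]$, and a Markov/Chernoff argument with optimized $\theta$. The steps you spell out — factor-of-$2$ from the two eigenvalue tails, the power-series bound $X_k^j\preceq L^{j-2}X_k^2$, operator monotonicity of $\log$, the $\operatorname{tr}\exp(H)\le d\,e^{\lambda_{\max}(H)}$ step, and checking $\theta\in(0,3/L)$ — are all correct and are simply the standard fleshing-out of what the paper's one-line sketch compresses.
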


\begin{proof}[Proof sketch]
Use Lieb’s concavity for the mgf and the bound  
$\log \E e^{\theta X_k}\preceq \tfrac{\theta^2}{2(1-\theta L/3)} \E[X_k^2]$.  
Apply Markov’s inequality and optimize over $\theta$.
\end{proof}

\begin{theorem}[Noncommutative Khintchine]\label{thm:nck}
Let $\{\varepsilon_i\}$ be i.i.d. Rademachers and $A_1,\dots,A_n$ fixed matrices. Then
\[
\E_{\varepsilon}\Big\|\sum_{i=1}^n \varepsilon_i A_i\Big\|_{\op}\ \le\
C\cdot \max\!\left\{\Big\|\Big(\sum_{i=1}^n A_iA_i^{*}\Big)^{1/2}\Big\|_{\op},\
\Big\|\Big(\sum_{i=1}^n A_i^{*}A_i\Big)^{1/2}\Big\|_{\op}\right\}.
\]
\end{theorem}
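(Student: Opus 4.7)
The plan is to prove the noncommutative Khintchine inequality by combining a Hermitian dilation with the matrix Laplace-transform machinery already invoked in Theorem~\ref{thm:bernstein}. First I would reduce the (possibly rectangular) matrices $A_i$ to the self-adjoint case by the standard dilation
$$\tilde A_i := \begin{pmatrix} 0 & A_i \\ A_i^* & 0 \end{pmatrix},$$
which is self-adjoint, satisfies $\|\tilde A_i\|_{\op}=\|A_i\|_{\op}$, and has
$$\tilde A_i^{\,2} = \begin{pmatrix} A_iA_i^* & 0 \\ 0 & A_i^*A_i \end{pmatrix}.$$
Consequently $\bigl\|\sum_i \varepsilon_i \tilde A_i\bigr\|_{\op}=\bigl\|\sum_i \varepsilon_i A_i\bigr\|_{\op}$, while the variance proxy $\sigma^2 := \bigl\|\sum_i \tilde A_i^{\,2}\bigr\|_{\op}$ equals the squared maximum on the right-hand side of the claim. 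This is the key structural reduction: it collapses the two separate column/row sums into a single operator-norm quantity.

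Next I would bound the matrix moment generating function. Since $\E[e^{\theta\varepsilon \tilde A_i}]\preceq e^{\theta^2 \tilde A_i^{\,2}/2}$ by term-by-term comparison of the Rademacher mgf with the Gaussian mgf, Lieb's concavity yields
$$\E\,\operatorname{tr}\exp\Bigl(\theta \sum_i \varepsilon_i \tilde A_i\Bigr) \ \le\ \operatorname{tr}\exp\Bigl(\tfrac{\theta^2}{2}\sum_i \tilde A_i^{\,2}\Bigr) \ \le\ 2d\cdot e^{\theta^2\sigma^2/2}.$$
Applying Markov's inequality to $\operatorname{tr}\exp$ and optimizing over $\theta>0$ gives the sub-Gaussian tail
$$\Prb\Bigl(\bigl\|\textstyle\sum_i \varepsilon_i A_i\bigr\|_{\op}\ge t\Bigr) \ \le\ 2d\, e^{-t^2/(2\sigma^2)},$$
and integrating this tail via the layer-cake formula yields $\E\bigl\|\sum_i \varepsilon_i A_i\bigr\|_{\op}\lesssim \sigma\sqrt{\log d}$, which proves the claim with constant $C=O(\sqrt{\log d})$. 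Since the paper already absorbs a $\sqrt{\log d}$ factor into its $\widetilde O(\cdot)$ notation (see Section~\ref{sec:chaining}, Step~4), this version suffices for all subsequent applications.

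The main obstacle, and the only place where one must exercise care, is extracting the genuinely \emph{dimension-free} constant written in the statement. The Bernstein route above unavoidably loses a $\sqrt{\log d}$; to remove it one would instead bound $\E\bigl\|\sum_i\varepsilon_i A_i\bigr\|_{S_{2p}}^{2p}$ directly via the Lust-Piquard Schatten inequality, use $\|\cdot\|_{\op}\le \|\cdot\|_{S_{2p}}$, and either let $p\to\infty$ or invoke Pisier's duality argument (cf.\ \cite{Pisier03}) to balance the $\sqrt{p}$ moment growth against the Schatten norm's sensitivity. The delicate $p$-th moment calculation there is the genuine technical content; for the present paper, however, the Bernstein-dilation route is the shortest self-contained path, and the resulting $\sqrt{\log d}$ overhead is harmless.
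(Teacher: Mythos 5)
Your dilation-plus-Bernstein argument is internally correct, but it proves a strictly weaker statement than Theorem~\ref{thm:nck}: you obtain $\E\bigl\|\sum_i\varepsilon_i A_i\bigr\|_{\op}\lesssim \sigma\sqrt{\log d}$ with $\sigma=\max\bigl\{\|(\sum_i A_iA_i^*)^{1/2}\|_{\op},\|(\sum_i A_i^*A_i)^{1/2}\|_{\op}\bigr\}$, whereas the theorem asserts a \emph{dimension-free} constant $C$. You flag this yourself, but the claim that the $\sqrt{\log d}$ slack ``suffices for all subsequent applications'' is not quite accurate. The dimension-free inequality is invoked in Section~\ref{sec:conservative}, Step~3, to get $\E_\varepsilon\|B_i\|_{\op}\lesssim\sqrt{K}L$ with no residual $d$-dependence, and that feeds directly into the headline expectation bound $\E[\Dev_{1.5}]\le C\,M\log M/\sqrt{K}$ of Theorem~\ref{thm:conservative-app}. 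Substituting your version would inflate that to $\tilde O\bigl(M\log M\cdot\sqrt{\log d}/\sqrt{K}\bigr)$; under the paper's eventual regime $d\le\mathrm{poly}(M)$ this is only a $\sqrt{\log M}$ degradation, but it does change the stated conservative bound. The $\sqrt{\log d}$ overhead is explicitly tolerated only in the chaining route (Section~\ref{sec:chaining}, Step~4), not in the conservative one.

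Where you and the paper diverge is instructive. Both start from the same Hermitian dilation $\tilde A_i$. The paper's sketch then follows the classical dimension-free path: symmetrize and contract to Gaussian coefficients, pass to block-operator form, and apply the scalar Khintchine inequality along singular directions --- in spirit the Lust-Piquard/Pisier Schatten-$p$ moment argument (letting $p$ grow so that $\|\cdot\|_{S_{2p}}$ tracks $\|\cdot\|_{\op}$ without a trace over $d$ coordinates). You instead route through Lieb's concavity and the matrix Laplace transform, which is shorter and reuses the Theorem~\ref{thm:bernstein} machinery, but inherits a $\log d$ from the trace-to-operator-norm step. You correctly identify the Schatten-$p$ computation as the genuine technical content needed to remove this factor; the gap is that you stop short of carrying it out, so the theorem as stated is not established.
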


\begin{proof}[Proof sketch]
Symmetrization + contraction reduce to Gaussian case; then use block operator form and  
scalar Khintchine inequality on singular vectors.  
\end{proof}

\begin{theorem}[Dudley entropy integral]\label{thm:dudley}
Let $(T,d)$ be totally bounded and $\{X_t\}$ centered sub-Gaussian w.r.t.~$d$. Then
\[
\E\sup_{t\in T} X_t \ \le\ C\int_{0}^{\mathrm{diam}(T)} 
   \sqrt{\log N(T,d,\varepsilon)}\, d\varepsilon.
\]
\end{theorem}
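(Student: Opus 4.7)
The plan is to establish Dudley's bound via the classical \emph{chaining} construction: approximate each $t\in T$ by a sequence of progressively finer nets, telescope the centered increment $X_t-X_{t_0}$ across levels, and bound each level using the sub-Gaussian maximal inequality (an immediate consequence of Lemma~\ref{lem:subg} via union bound over Chernoff tails).

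First, I would fix an anchor point $t_0\in T$ and introduce dyadic scales $\varepsilon_n:=2^{-n}\,\mathrm{diam}(T)$ for $n\ge 0$. At each scale I would select a minimal $\varepsilon_n$-net $T_n\subseteq T$ of cardinality $N(T,d,\varepsilon_n)$, setting $T_0:=\{t_0\}$, and for every $t\in T$ let $\pi_n(t)\in T_n$ be a nearest-point projection. Then $d(t,\pi_n(t))\le \varepsilon_n$ and, by the triangle inequality, $d(\pi_n(t),\pi_{n+1}(t))\le \varepsilon_n+\varepsilon_{n+1}\le 3\varepsilon_{n+1}$. Total boundedness guarantees each $T_n$ is finite, and I would first derive the bound for suprema over a countable dense subset of $T$, lifting to the full $T$ at the end via monotone convergence.

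Second, I would telescope $X_t-X_{t_0}=\sum_{n\ge 0}\bigl(X_{\pi_{n+1}(t)}-X_{\pi_n(t)}\bigr)$, with convergence guaranteed by sub-Gaussian continuity. At level $n$ the increment ranges over at most $|T_n|\cdot|T_{n+1}|\le N(T,d,\varepsilon_{n+1})^2$ pairs, each sub-Gaussian with proxy $\le 3\varepsilon_{n+1}$, so the union-bound maximal inequality gives
\[
\E\max_{t}\bigl|X_{\pi_{n+1}(t)}-X_{\pi_n(t)}\bigr|\;\le\; C\,\varepsilon_{n+1}\sqrt{\log N(T,d,\varepsilon_{n+1})}.
\]
Summing over $n$ yields
\[
\E\sup_{t\in T}(X_t-X_{t_0})\;\le\; C\sum_{n\ge 0}\varepsilon_{n+1}\sqrt{\log N(T,d,\varepsilon_{n+1})}.
\]
Since $\varepsilon_n-\varepsilon_{n+1}=\varepsilon_{n+1}$ and $\varepsilon\mapsto\sqrt{\log N(T,d,\varepsilon)}$ is nonincreasing, a Riemann-sum comparison converts this dyadic sum into $C\int_0^{\mathrm{diam}(T)}\sqrt{\log N(T,d,\varepsilon)}\,d\varepsilon$; finally $\E\sup_t X_t=\E\sup_t(X_t-X_{t_0})$ follows from centeredness of the process.

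The main obstacle I anticipate is not the geometric-to-integral passage (a routine Riemann-sum comparison) but the measurability and exchange-of-limit bookkeeping at the boundary of the chain: justifying the interchange of $\sup$ and $\E$ by reducing to a countable dense subset via total boundedness, and confirming that the chained increments $X_{\pi_{n+1}(t)}-X_{\pi_n(t)}$ remain centered sub-Gaussian with the claimed proxy once the projections $\pi_n$ are composed. These points are standard but require care so that the sub-Gaussian proxy inherited through $d$ is not double-counted across levels.
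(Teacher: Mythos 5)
Your proposal is correct and follows essentially the same route as the paper's sketch: dyadic nets, telescoping the chain, sub-Gaussian union/maximal bounds at each scale, and a Riemann-sum comparison to pass to the entropy integral. The additional care you flag (countable dense reduction, increment proxies through the projections) fills in exactly the details the paper's sketch leaves implicit.
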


\begin{proof}[Proof sketch]
Construct nets $T_j$ at scales $\varepsilon_j\downarrow0$, telescope increments,  
and apply sub-Gaussian union bounds; pass to the integral limit.
\end{proof}

\begin{lemma}[Hamming cube covering]\label{lem:hamming}
On $\{0,1\}^L$ with normalized Hamming metric
\[
d_H(s,t)=\sqrt{\tfrac{1}{L}\#\{i: s_i\neq t_i\}}=\sqrt{\tfrac{\|s-t\|_0}{L}},
\]
for $\varepsilon\in(0,1]$,
\[
N(\{0,1\}^L,d_H,\varepsilon) \ \le\ \exp\!\big(L H(\varepsilon^2)\big).
\]
\end{lemma}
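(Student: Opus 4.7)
The plan is to reduce the $d_H$-covering problem to counting lattice points in a single Hamming ball, and then invoke the standard binomial-entropy estimate.

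First, I would translate scales: by definition $d_H(s,t)\le\varepsilon$ is equivalent to $\|s-t\|_0\le\varepsilon^2 L$, so an $\varepsilon$-covering of $(\{0,1\}^L,d_H)$ is precisely a covering of $\{0,1\}^L$ by unnormalized Hamming balls of radius $k:=\lfloor\varepsilon^2 L\rfloor$. Because $(\{0,1\}^L,\oplus)$ acts on itself by translations that preserve $d_H$, every Hamming ball has the same cardinality
\[
V(k)\;=\;\sum_{j=0}^{k}\binom{L}{j},
\]
so only one ball volume has to be estimated.

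Second, I would produce a cover whose size is at most $V(k)$. The natural route is a maximal-packing argument: pick net points one at a time, each at $d_H$-distance strictly greater than $\varepsilon$ from all previously chosen points. Maximality forces the resulting family to be an $\varepsilon$-cover, and the group structure combined with a volume-packing count in the admissible regime $\varepsilon\in(0,1]$ yields the first inequality $N(\{0,1\}^L,d_H,\varepsilon)\le V(k)$.

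Third, I would pass to the entropy bound via the classical Markov trick applied to a binomial mass. Setting $p=\varepsilon^2$,
\[
1\;\ge\;\sum_{j=0}^{\lfloor pL\rfloor}\binom{L}{j}p^{j}(1-p)^{L-j}\;\ge\;p^{pL}(1-p)^{(1-p)L}\,V(k),
\]
where the second inequality replaces each summand's factor $p^{j}(1-p)^{L-j}$ by its minimum over $j\le pL$. Rearranging and taking logs gives $V(k)\le\exp\!\bigl(L H(p)\bigr)=\exp\!\bigl(L H(\varepsilon^2)\bigr)$, which is the desired second inequality.

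The main obstacle will be the second step, namely tightening a standard packing/volume argument (which on its face only gives $N\lesssim 2^L/V(\lfloor k/4\rfloor)$) to the sharper form $N\le V(k)$; I would handle this either by exploiting the homogeneity of $\{0,1\}^L$ under XOR directly, or by invoking the existence of covering codes achieving the sphere-covering asymptotics in this radius regime. Steps one and three are purely computational and should not present any difficulty.
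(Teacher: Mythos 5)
Your Step 2 is the crux, and the gap you flag there is not a technical obstacle you can engineer around: the inequality $N(\{0,1\}^L,d_H,\varepsilon)\le V(k)$ with $k=\lfloor\varepsilon^2 L\rfloor$ is simply false in the interesting small-$\varepsilon$ regime. The sphere-covering lower bound gives
\[
N(\{0,1\}^L,d_H,\varepsilon)\ \ge\ \frac{2^L}{V(k)}\ \ge\ \exp\!\bigl(L(\log 2 - H(\varepsilon^2))\bigr),
\]
which strictly exceeds $\exp(LH(\varepsilon^2))$ whenever $H(\varepsilon^2)<\tfrac12\log 2$, i.e.\ for all $\varepsilon^2\lesssim 0.11$. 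At the extreme, if $\varepsilon^2 L<1$ then every $d_H$-ball is a single point and $N=2^L$, while $\exp(LH(\varepsilon^2))\to 1$. Neither of your proposed repairs can change this: homogeneity under XOR only says all balls have the same volume, which you already used; and covering codes achieving the sphere-covering asymptotics give $N\approx 2^L/V(k)$ up to lower-order factors, which for small $\varepsilon$ is far \emph{larger} than $V(k)$, not smaller. What greedy packing actually gives is $N\le 2^L/V(\lfloor\varepsilon^2 L/4\rfloor)$ (balls of radius $\varepsilon/2$), and the binomial-entropy trick only upper-bounds $V$, hence \emph{lower}-bounds $2^L/V$ --- the wrong direction for the stated conclusion.

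Your Step 1 (scale translation) and Step 3 (the Markov trick on the binomial mass, giving $V(k)\le\exp(LH(\varepsilon^2))$ for $\varepsilon^2\le\tfrac12$) are correct; Step 3 is exactly the standard volume estimate. But $V(k)$ is a ball volume, not the covering number. For comparison, the paper's own proof sketch is internally inconsistent in the same way you sensed: it asserts greedy packing gives $N\le 2^L/V(\varepsilon)$ (itself slightly misstated --- the denominator should be at scale $\varepsilon/2$) and then applies the entropy bound to $V$, concluding $N\le\exp(LH(\varepsilon^2))$; but an upper bound on $V$ yields a \emph{lower} bound on $2^L/V$, so the two sentences do not compose. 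In short, the Lemma as stated is incorrect and neither the paper's sketch nor your proposal proves it; the correct statements available by these methods are the volume bound $V(\lfloor\varepsilon^2 L\rfloor)\le\exp(LH(\varepsilon^2))$ and the covering bound $N(\varepsilon)\le 2^L/V(\lfloor\varepsilon^2 L/4\rfloor)$.
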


\begin{proof}
Volume of $\varepsilon$-ball is $\sum_{j=0}^{\lfloor \varepsilon^2 L\rfloor}\binom{L}{j}$.  
Greedy packing $\implies N\le 2^L / V(\varepsilon)$.  
Apply $\binom{L}{\alpha L}\le \exp(LH(\alpha))$.
\end{proof}

\begin{remark}
In §3.5, we use two pseudometrics on $2^{[M]}$:
$\rho_{\mathrm{gen}}(S,T)=c_0 L\sqrt{K}\,|S\triangle T|$ and
$\rho_{\mathrm{str}}(S,T)=c_0 L\sqrt{K\,|S\triangle T|}$.
Both relate to the normalized Hamming metric via a scale depending on $|S\triangle T|$.
Covering numbers obey $\log N(\varepsilon)\lesssim u\log(eM/u)$ with
$u=\varepsilon/(c_0L\sqrt{K})$ (general) or $u=(\varepsilon/(c_0L\sqrt{K}))^2$ (structured).
\end{remark}

\section{Detailed Proof of the Conservative Route}\label{app:conservative}

In this appendix we provide the full derivation for the bound claimed in 
Section~\ref{sec:conservative}, expanding Steps~5.1–5.5 into complete proofs.

\subsection*{Setup}
Recall $P=\{\Pi_i\}_{i\in[M]}$ with $\sum_{i=1}^M \Pi_i=I$, and define
\[
Y_{k,S}:=D_{V,R_k}^{\dagger}\Pi_S D_{V,R_k},\quad
\mu_S:=\E_h[D_{V,h}^{\dagger}\Pi_S D_{V,h}],\quad
\overline{Y}_S:=\frac1K\sum_{k=1}^K Y_{k,S}.
\]
We analyze
\[
\Dev_{1.5}:=\sup_{S\subseteq[M]}\|\overline{Y}_S-\mu_S\|_{\op}.
\]

\begin{assumption}[Per-sample operator size]
There exists $c_0>0$ such that w.h.p.\ $\|D_{V,h}\|\le c_0\sqrt{\log M}$.
Hence, for all $i,k$,
\[
\|A_{k,i}\|_{\op}=\|D_{V,R_k}^{\dagger}\Pi_i D_{V,R_k}\|_{\op}\le L,
\qquad L=\Theta(\log M).
\]
\emph{Remark.} Truncation can enforce this without changing expectations 
(bias $O(K^{-1})$) and moves exceptional events into the tail.
\end{assumption}

\subsection*{Step 1: Symmetrization}
\begin{lemma}\label{lem:sym-app}
With Rademacher $\{\varepsilon_k\}$,
\[
\E\Big[\sup_S\|\overline{Y}_S-\mu_S\|_{\op}\Big]
\;\le\;\frac{2}{K}\,
\E\Big[\sup_S\Bigl\|\sum_{k=1}^K \varepsilon_k Y_{k,S}\Bigr\|_{\op}\Big].
\]
\end{lemma}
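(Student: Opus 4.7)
The plan is to deploy the standard Gin\'e--Zinn symmetrization argument, transferred verbatim to the operator-valued setting. The only features of $\|\cdot\|_{\op}$ we need are that it is a norm (hence convex, so Jensen applies) and that introducing a ghost copy of the sample makes the pair $(Y_{k,S},Y'_{k,S})$ exchangeable in $k$, which delivers Rademacher symmetry.

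First I introduce independent ghost seeds $R'_1,\dots,R'_K$ with the same distribution as $R_1,\dots,R_K$, set $Y'_{k,S}:=D_{V,R'_k}^{\dagger}\Pi_S D_{V,R'_k}$ and $\overline Y'_S:=\frac{1}{K}\sum_k Y'_{k,S}$, and use $\mu_S=\E'\overline Y'_S$ (where $\E'$ is expectation over the ghost alone) to write $\overline Y_S-\mu_S=\E'[\overline Y_S-\overline Y'_S]$. Jensen's inequality applied to the convex functional $\|\cdot\|_{\op}$, followed by pulling $\sup_S$ past $\E'$ via $\sup_S \E'(\cdot)\le \E'\sup_S(\cdot)$ (measurability is trivial since $S$ ranges over the finite set $2^{[M]}$), and taking the outer expectation gives
\[
\E\sup_S\|\overline Y_S-\mu_S\|_{\op}
\ \le\ \E\sup_S\|\overline Y_S-\overline Y'_S\|_{\op}.
\]

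Next I invoke Rademacher symmetry: since $(Y_{k,S},Y'_{k,S})$ is exchangeable per $k$ independently, the joint process $\{Y_{k,S}-Y'_{k,S}\}_{k,S}$ has the same law as $\{\varepsilon_k(Y_{k,S}-Y'_{k,S})\}_{k,S}$ for Rademachers $\varepsilon_k$ independent of the samples. Hence
\[
\E\sup_S\Big\|\tfrac{1}{K}\sum_k (Y_{k,S}-Y'_{k,S})\Big\|_{\op}
\ =\ \E\sup_S\Big\|\tfrac{1}{K}\sum_k \varepsilon_k(Y_{k,S}-Y'_{k,S})\Big\|_{\op}.
\]
A triangle inequality splits this into primed and unprimed symmetrized sums, and because they have identical expected suprema I absorb the split into a factor of $2$, producing the claimed $2/K$ prefactor.

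\emph{On the main obstacle.} There is no serious technical barrier; this is a textbook derivation, only lifted from scalar to operator norms. The points requiring care are (i) applying Jensen to the matrix-valued conditional expectation \emph{before} taking norms (justified because $\|\cdot\|_{\op}$ is a norm), and (ii) commuting $\sup_S$ with the ghost expectation in the favorable direction (safe since $|2^{[M]}|<\infty$). Integrability of all suprema is supplied by Assumption~\ref{ass:size} (or its truncated version), so no $L^1$ caveat arises.
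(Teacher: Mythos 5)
Your proof is correct and follows exactly the approach the paper's one-line sketch indicates: a ghost sample, Jensen applied to the convex functional $\|\cdot\|_{\op}$, Rademacher symmetrization via exchangeability, and a triangle inequality to absorb the primed copy into the factor $2$. You have simply filled in the standard Gin\'e--Zinn details that the paper leaves implicit.
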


\begin{proof}
Use ghost-sample $\{R_k'\}$, Jensen, and standard symmetrization.
\end{proof}

\subsection*{Step 2: Linearization}
Define
\[
B_i:=\sum_{k=1}^K \varepsilon_k A_{k,i},\qquad
A_{k,i}:=D_{V,R_k}^{\dagger}\Pi_i D_{V,R_k}.
\]
Then
\[
\sup_S\Bigl\|\sum_{k=1}^K \varepsilon_k Y_{k,S}\Bigr\|_{\op}
=\sup_S\Bigl\|\sum_{i\in S}B_i\Bigr\|_{\op}
\;\le\;\sum_{i=1}^M \|B_i\|_{\op}.
\]

\subsection*{Step 3: Rademacher Block Bound}
\begin{proposition}
Conditioning on $\{R_k\}$,
\[
\E_{\varepsilon}\|B_i\|_{\op}
\lesssim \Bigl\|\sum_{k=1}^K A_{k,i}^2\Bigr\|_{\op}^{1/2}
\le \sqrt{K}\,L.
\]
\end{proposition}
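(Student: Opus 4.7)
The plan is to apply the noncommutative Khintchine inequality (Theorem~\ref{thm:nck}) to the conditioned Rademacher sum $B_i=\sum_{k=1}^{K}\varepsilon_k A_{k,i}$, and then exploit the positive-semidefinite structure of the $A_{k,i}$ together with Assumption~\ref{ass:size} to reduce the resulting variance proxy to $\sqrt{K}\,L$.

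First I would freeze the samples $\{R_k\}$, so that $\{A_{k,i}\}_{k=1}^{K}$ becomes a deterministic family and the only remaining randomness lies in $\varepsilon_1,\dots,\varepsilon_K$. Since each $A_{k,i}=D_{V,R_k}^{\dagger}\Pi_i D_{V,R_k}$ is a conjugation of the projector $\Pi_i$, it is self-adjoint and PSD; in particular $A_{k,i}^{*}=A_{k,i}$, so both quantities $\sum_k A_{k,i}A_{k,i}^{*}$ and $\sum_k A_{k,i}^{*}A_{k,i}$ appearing in the Khintchine bound collapse to $\sum_k A_{k,i}^{2}$. Applying Theorem~\ref{thm:nck} yields directly
\[
\E_{\varepsilon}\|B_i\|_{\op}\ \lesssim\ \Bigl\|\Bigl(\sum_{k=1}^{K}A_{k,i}^{2}\Bigr)^{1/2}\Bigr\|_{\op}=\Bigl\|\sum_{k=1}^{K}A_{k,i}^{2}\Bigr\|_{\op}^{1/2},
\]
which is the first inequality in the statement.

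For the second inequality, I would invoke the spectral fact that $A^{2}\preceq \|A\|_{\op}\,A$ for any PSD matrix $A$ (check on each eigenvalue: $\lambda^{2}\le\|A\|_{\op}\lambda$). Combined with $\|A_{k,i}\|_{\op}\le L$ from Assumption~\ref{ass:size}, this gives $A_{k,i}^{2}\preceq L\,A_{k,i}$ for every $k$. Summing and using monotonicity of the operator norm on PSD matrices,
\[
\Bigl\|\sum_{k=1}^{K}A_{k,i}^{2}\Bigr\|_{\op}\ \le\ L\,\Bigl\|\sum_{k=1}^{K}A_{k,i}\Bigr\|_{\op}\ \le\ L\cdot KL=KL^{2},
\]
where the last step uses $A_{k,i}\preceq LI$ (equivalently, the triangle inequality applied to PSD summands). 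Taking square roots delivers the bound $\sqrt{K}\,L$.

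I do not anticipate a genuine obstacle: the proof is a clean two-line composition of Khintchine with the PSD reduction. The only point worth flagging carefully is that the factor $L$ in the reduction $A_{k,i}^{2}\preceq L A_{k,i}$ is the operator-norm bound from Assumption~\ref{ass:size}, which holds only with high probability; the conditional statement should therefore be understood on the high-probability event of that assumption, with exceptional configurations absorbed by the truncation remarked there.
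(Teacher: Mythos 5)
Your proof is correct and follows exactly the paper's route: apply noncommutative Khintchine (noting that self-adjointness collapses the two variance terms to $\sum_k A_{k,i}^2$), then use $A_{k,i}^2 \preceq \|A_{k,i}\|_{\op}\,A_{k,i} \preceq L\,A_{k,i}$ together with $\bigl\|\sum_k A_{k,i}\bigr\|_{\op} \le KL$ to obtain $KL^2$ and take a square root. Your closing remark about the high-probability nature of Assumption~\ref{ass:size} and the truncation caveat matches the paper's own remarks on this point.
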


\begin{proof}
By noncommutative Khintchine and $A_{k,i}^2\preceq L A_{k,i}$.
\end{proof}

\subsection*{Step 4: High-Probability Control}
Let $X_{k,i}:=\varepsilon_k A_{k,i}$, then $\|X_{k,i}\|\le L$, and
\[
V_i=\Bigl\|\sum_k\E[X_{k,i}^2]\Bigr\|\le KL^2.
\]
Matrix Bernstein (Tropp) gives
\[
\Pr(\|B_i\|_{\op}\ge t)\le 2d\exp\!\left(-\tfrac{t^2/2}{V_i+Lt/3}\right).
\]
Union bound over $i\in[M]$ yields w.p.\ $\ge1-\delta$,
\[
\max_i\|B_i\|_{\op}
\;\lesssim\;\sqrt{K}L\sqrt{\log\frac{2Md}{\delta}}
+L\log\frac{2Md}{\delta}.
\]

\subsection*{Step 5: Conclusion}
Combining above,
\[
\E[\Dev_{1.5}]
\le \frac{2M}{K}\E\|B_1\|_{\op}
\lesssim \frac{M\log M}{\sqrt{K}}.
\]
With probability $1-\delta$,
\[
\Dev_{1.5}\lesssim
\frac{M\log M}{\sqrt{K}}\sqrt{\log\frac{2Md}{\delta}}
+\frac{M\log M}{K}\log\frac{2Md}{\delta}.
\]

\begin{theorem}[Conservative bound, detailed]\label{thm:conservative-app}
Under Assumption~\ref{ass:size}, there exist constants $C,c>0$ such that
\[
\E[\Dev_{1.5}]\;\le\; C\frac{M\log M}{\sqrt{K}},
\]
and w.p.\ $\ge 1-\delta$,
\[
\Dev_{1.5}\le C\Bigl[
\frac{M\log M}{\sqrt{K}}\sqrt{\log\frac{2Md}{\delta}}
+\frac{M\log M}{K}\log\frac{2Md}{\delta}
\Bigr].
\]
\end{theorem}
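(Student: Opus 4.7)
The plan is to execute in full the five-step strategy sketched in Section~\ref{sec:conservative}, promoting each informal inequality to a rigorous bound with explicit constants and explicit dependence on the confidence parameter $\delta$. The backbone is Rademacher symmetrization followed by a triangle inequality that converts the subset supremum into a sum of $M$ per-index blocks, each of which is controlled by a combination of noncommutative Khintchine (for the expectation) and matrix Bernstein (for the tail).

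First, I would invoke Lemma~\ref{lem:sym-app} (standard ghost-sample symmetrization, conditional on the seeds $\{R_k\}$) to replace $\overline{Y}_S-\mu_S$ by $\tfrac{2}{K}\sum_k\varepsilon_k Y_{k,S}$ inside the operator-norm supremum, incurring only a factor of $2$. Next, I would linearize through $\Pi_S=\sum_{i\in S}\Pi_i$: since $\sum_k\varepsilon_k Y_{k,S}=\sum_{i\in S}B_i$ with $B_i=\sum_k \varepsilon_k A_{k,i}$, the triangle inequality gives $\sup_S\|\sum_{i\in S}B_i\|_{\op}\le\sum_{i=1}^M\|B_i\|_{\op}$. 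This is the step where the linear $M$ factor is introduced; removing it requires the structural assumptions developed in Section~6, but here it is the correct route for an assumption-light bound.

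Third, for each fixed block I would condition on $\{R_k\}$ and apply noncommutative Khintchine (Theorem~\ref{thm:nck}) to obtain $\E_\varepsilon\|B_i\|_{\op}\lesssim\|\sum_k A_{k,i}^2\|_{\op}^{1/2}$. Assumption~\ref{ass:size} together with $A_{k,i}\succeq 0$ and hence $A_{k,i}^2\preceq L\,A_{k,i}$ yields $\|\sum_k A_{k,i}^2\|_{\op}\le KL^2$, so every block contributes at most $\sqrt{K}\,L$ in expectation. Exchangeability over $i\in[M]$ then upgrades the triangle bound to $\E[\Dev_{1.5}]\le(2M/K)\,\E\|B_1\|_{\op}\lesssim M L/\sqrt{K}=M\log M/\sqrt{K}$, which is the claimed expectation bound with an absolute constant $C$.

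For the high-probability half I would apply matrix Bernstein (Theorem~\ref{thm:bernstein}) to the centered, bounded, self-adjoint summands $\varepsilon_k A_{k,i}$, using variance proxy $V_i\le KL^2$ and per-term bound $L$, then invert the tail inequality to obtain $\|B_i\|_{\op}\lesssim \sqrt{K}L\sqrt{\log(2Md/\delta)}+L\log(2Md/\delta)$ with probability at least $1-\delta/M$. A union bound over the $M$ indices controls $\max_i\|B_i\|_{\op}$ simultaneously; multiplying by $M$ from Step~2 and by $2/K$ from Step~1 yields the two-term tail estimate in the theorem. The proof contains no genuine obstacle — it is a careful assembly of textbook inequalities — so the main care is bookkeeping: I would absorb the exceptional event of Assumption~\ref{ass:size} into a single failure probability by the truncation remark in the appendix setup (introducing a bias of order $K^{-1}$ that is dominated by the main term), and consolidate the constants from Khintchine, Bernstein, and the symmetrization factor $2$ into one absolute $C$.
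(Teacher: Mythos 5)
Your proposal reproduces the paper's own five-step argument in Appendix~\ref{app:conservative} essentially verbatim: ghost-sample symmetrization (Lemma~\ref{lem:sym-app}), linearization of $\Pi_S$ into per-index blocks with the triangle inequality, noncommutative Khintchine plus exchangeability for the expectation bound, matrix Bernstein with a union bound over $i\in[M]$ for the tail, and the truncation remark to absorb the exceptional event of Assumption~\ref{ass:size}. This is the same route as the paper; you also correctly note the distinction between using exchangeability ($\E\sum_i\|B_i\|_{\op}=M\E\|B_1\|_{\op}$) for the expectation half and the crude $\sum_i\|B_i\|_{\op}\le M\max_i\|B_i\|_{\op}$ for the high-probability half, which is exactly how Steps~4--5 of the appendix proceed.
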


\paragraph{Remark.}
The linear $M$ factor comes from the crude triangle inequality in Step~2.
Chaining (Section~\ref{sec:chaining}) clarifies constants/polylogs; together with
entropy reduction (e.g.\ $r$-sparsity) it lowers the $M$-dependence.
\end{document}